\documentclass[11pt]{article}

\usepackage[utf8]{inputenc}
\usepackage[T1]{fontenc}
\usepackage{lmodern}
\usepackage[protrusion=true,expansion=false]{microtype}
\usepackage[margin=1in, top=1.1in, bottom=1.1in]{geometry}

\usepackage{amsmath,amssymb,amsthm}
\usepackage{booktabs}
\usepackage{graphicx}
\usepackage[colorlinks=true,
            linkcolor=blue!70!black,
            citecolor=blue!70!black,
            urlcolor=blue!70!black]{hyperref}
\usepackage{xcolor}
\usepackage{enumitem}
\usepackage[numbers,sort&compress]{natbib}
\usepackage{caption}
\usepackage{subcaption}
\usepackage{tikz}
\usetikzlibrary{shapes.geometric,arrows.meta,positioning,fit,backgrounds,calc}
\usepackage{listings}
\usepackage{float}
\usepackage{array}
\usepackage{multirow}

\usepackage{titlesec}
\titleformat{\section}{\large\bfseries}{\thesection}{1em}{}
\titleformat{\subsection}{\normalsize\bfseries}{\thesubsection}{1em}{}

\newtheorem{theorem}{Theorem}[section]

\newtheorem{corollary}[theorem]{Corollary}
\newtheorem{proposition}[theorem]{Proposition}
\newtheorem{definition}{Definition}[section]
\newtheorem{assumption}{Assumption}[section]

\newtheorem{remark}{Remark}[section]

\newcommand{\Sr}{S_r}
\newcommand{\Sp}{S_p}
\newcommand{\Ar}{A_r}
\newcommand{\Ap}{A_p}
\newcommand{\St}[1]{S_r(#1)}

\newcommand{\Ndefval}{\bot}
\newcommand{\IER}{\mathrm{IER}}
\newcommand{\SHR}{\mathrm{SHR}}
\newcommand{\OCR}{\mathrm{OCR}}

\title{\textbf{Reconstructive Authority Model:}\\
       \textbf{Runtime Execution Validity Under Partial Observability}\\[0.5em]
       \large\normalfont Agent Governance Series --- Paper~5}

\author{Marcelo Fernandez\\
        \small\textit{Independent Research}\\
        \small\texttt{DOI: 10.5281/zenodo.19669430}}

\date{April 2026}

\begin{document}
\maketitle

\begin{center}
\small
\begin{tabular}{clll}
\toprule
\textbf{Paper} & \textbf{Short title} & \textbf{Zenodo DOI} & \textbf{arXiv}\\
\midrule
P0 & Atomic Decision Boundaries~\cite{fernandez2026a}     & 10.5281/zenodo.19670649 & arXiv:2604.17511\\
P1 & Agent Control Protocol (ACP)~\cite{fernandez2026b}   & 10.5281/zenodo.19672575 & arXiv:2603.18829\\
P2 & From Admission to Invariants (IML)~\cite{fernandez2026c} & 10.5281/zenodo.19672589 & arXiv:2604.17517\\
P3/4 & Irreducible Governance Structure~\cite{fernandez2026govstr} & 10.5281/zenodo.19708496 & TBD\\
P5 & \textbf{Reconstructive Authority Model (RAM)} & 10.5281/zenodo.19669430 & TBD\\
P6 & Operationalizing Reconstructive Authority~\cite{fernandez2026op} & 10.5281/zenodo.19699460 & TBD\\
\bottomrule
\end{tabular}
\end{center}

\begin{abstract}
Autonomous systems increasingly operate under partial observability where the
true execution-relevant state $\Sr(t)$ is never fully accessible.
Existing governance mechanisms---trusted execution environments, oracle-signed
state proofs, cryptographic attestation---enforce the integrity of computation
and state projections.
We show this is structurally insufficient: \emph{an authenticated projection
of state is necessary, never sufficient}.

We separate integrity from coverage: authenticated projection is necessary,
never sufficient.
The \emph{Reconstructive Authority Model} (RAM) introduces a reconstruction
gate that reasons over an explicit \emph{coverage envelope}---comprising the
proven state, declared assumptions, and the unobservable residual whose
existence is established by Paper~2 (IML,~\cite{fernandez2026c})---and permits
execution only when coverage is adequate for that action class; otherwise it
narrows privileges dynamically or fails closed.
Attestation proves trust in measurement, not completeness of
execution-relevant reality.

We formalize RAM, prove its necessity via two theorems and three corollaries,
present a hybrid RAM\,+\,Attestation architecture with privilege-narrowing,
and demonstrate through case study and synthetic experiments that RAM eliminates
invalid execution under all coverage levels while attestation-based systems
exhibit invalid execution rates proportional to $(1 - |\Sp|/|\Sr|)$.

This paper is Paper~5 of the Agent Governance Series.
It operationalizes the observability impossibility of
Paper~2~(IML,~\cite{fernandez2026c}) and completes the execution semantics of
the four-layer architecture of Paper~3/4~(\cite{fernandez2026govstr}).
\end{abstract}

\tableofcontents
\newpage

\section{Introduction}
\label{sec:intro}

Autonomous agent systems face a fundamental tension: decisions must be made at
admission time based on available state, but execution happens later, when
conditions may have changed.
Existing governance frameworks address this through \emph{attestation}---cryptographic
mechanisms that verify consistency between the current execution environment and
a previously approved state.
Trusted Execution Environments (TEEs), oracle-signed state proofs, and
justification-based control systems all operate on this principle.

The limitation of attestation-based governance is structural, not implementational.
We \textbf{separate integrity from coverage}: attestation resolves the former
but is silent on the latter.
Integrity---that a computation was performed correctly on the observed state---
can be cryptographically guaranteed.
Coverage---that the observed state was \emph{sufficient} to justify the action
in the first place---cannot.

An authenticated projection of state is necessary, never sufficient.
Since the provable state $\Sp \subsetneq \Sr$ strictly in any real system,
there always exists a component $\delta = \Sr \setminus \Sp$ that influences
execution validity but lies outside the attested model.
When $\delta$ becomes execution-critical---due to runtime drift, emergent
conditions, or delayed observability---attestation produces a false positive:
a system that \emph{appears} consistent but \emph{is not valid} to execute.

We formalize this limitation and introduce the \textbf{Reconstructive Authority
Model (RAM)}: a governance framework that shifts the fundamental question from
\emph{``Is execution still consistent with what was approved?''} to
\emph{``Can authority still be constructed from what is true now?''}
RAM reasons over a \emph{coverage envelope} and permits action only when
coverage is adequate for the specific action class---narrowing privileges when
coverage is partial, failing closed when it is insufficient.

\paragraph{Contributions.}
\begin{enumerate}[leftmargin=*, label=(\roman*)]
  \item A formal characterization of the structural limitation of attestation-based
        governance systems, with proof that attestation correctness does not imply
        execution validity (Section~\ref{sec:limits}).
  \item The RAM formal model: definitions, non-persistence theorem, invariant
        treatment, and partial observability handling (Section~\ref{sec:ram}).
  \item Two theorems establishing (a)~the limits of attestation without
        reconstruction and (b)~the necessity of reconstruction for execution
        validity (Section~\ref{sec:theory}).
  \item A RAM+Attestation hybrid architecture with formal failure handling
        (Section~\ref{sec:arch}).
  \item A case study and synthetic experimental evaluation comparing
        Attestation, Oracle, and RAM models under drift injection
        (Sections~\ref{sec:case}--\ref{sec:exp}).
  \item An account of RAM's position within the Agent Governance Series
        (Section~\ref{sec:series}).
\end{enumerate}

\paragraph{Series context.}
The Agent Governance Series develops a complete formal theory of autonomous
agent governance.
Paper~0~\cite{fernandez2026a} establishes atomic decision boundaries---the
structural requirement for guaranteeing execution-time admissibility.
Paper~1 (ACP)~\cite{fernandez2026b} implements enforcement via admission control
at the policy boundary.
Paper~2 (IML)~\cite{fernandez2026c} proves the epistemological limits of local
observability: an agent can never fully observe the state space relevant to its
own invariants.
Paper~3/4~\cite{fernandez2026govstr} addresses both distributive limits
under strategy-proof allocation mechanisms and proves the irreducibility of the
four-layer governance architecture.
This paper (P5) closes the series by addressing the runtime question: given
that observability is incomplete (P2) and the architecture is irreducible (P3/4),
how should a system decide whether to execute at all?

\section{Background}
\label{sec:background}

\subsection{Attestation-Based Governance}

Trusted Execution Environments (TEEs) such as Intel SGX~\cite{costan2016intel}
provide hardware-backed attestation: a guarantee that a computation runs on
unmodified code within a tamper-resistant enclave.
At the governance level, TEE-based systems combine attestation with
justification-based control~\cite{gu2017certikos}, where an agent generates a
justification block $J$ prior to execution, and a TEE verifies that execution
proceeds only if $J$ is consistent with attested state.

Oracle-extended systems improve coverage by augmenting the attested state $\Sp$
with externally-signed state proofs $\Sp^{\text{oracle}}$, reducing the gap
between $\Sp$ and $\Sr$~\cite{adler2018astraea}.
Architectures such as ATLAS~\cite{zhang2020deco} combine TEE attestation with
oracle-anchored state proofs to increase the trustworthy observable surface.
However, as we prove formally (Corollary~\ref{cor:oracle}), no finite proof
system can achieve $\Sp = \Sr$ in a general system with partial observability
and dynamic state: oracle extensions reduce but cannot eliminate the state gap.

Runtime verification (RV) approaches~\cite{leucker2009brief, falcone2012runtime,
havelund2001java} monitor execution against formal specifications, typically
through monitors that observe traces and raise alarms on specification violations.
A critical distinction: RV monitors assume the monitored state is complete---that
what they observe is sufficient to evaluate the specification.
RAM relaxes this assumption explicitly.
Where an RV monitor asks \emph{``does the observed trace satisfy the property?''},
RAM asks \emph{``is what we observe sufficient to determine whether execution
should proceed at all?''}
This makes RAM complementary to RV: RV governs \emph{how} execution proceeds;
RAM governs \emph{whether} it may.

\subsection{Partial Observability}

Partial observability is a fundamental feature of real-world agent systems,
studied extensively in the context of Partially Observable Markov Decision
Processes (POMDPs)~\cite{kaelbling1998planning}.
The POMDP framework models agents that receive observations rather than direct
state access, maintaining belief distributions over the real state.
RAM shares this epistemic humility but addresses a different problem: not
optimal policy under uncertainty, but binary authority---whether the system is
\emph{permitted to act at all}.

\subsection{Agent Governance Series}

The governance series establishes a formal framework for agent systems under
finite observability.
Paper~2 (IML)~\cite{fernandez2026c} is most directly relevant here:
it proves formally that for any governance architecture with finite memory and
local observation, there exists a violation path that is undetectable until it
has already caused a constraint breach.
RAM operationalizes the response to this result: since full observability is
unachievable (IML), authority must be constructed conservatively from whatever
is observable, with explicit failure when observation is insufficient.

\section{Limitations of Attestation-Based Governance}
\label{sec:limits}

We identify six structural limitations of attestation-based governance.
These are not implementation failures; they are consequences of the gap
between $\Sp$ and $\Sr$.

\subsection{Integrity Does Not Imply Semantic Correctness}

Attestation mechanisms guarantee that a computation has not been tampered with
and executes within a trusted boundary.
They do not guarantee that the \emph{content} of that computation is semantically
correct or complete relative to the execution context.

When execution authority is derived from internally generated artifacts (justification
blocks), the system is vulnerable to:
\begin{itemize}
  \item Incomplete representations of state;
  \item Omitted constraints;
  \item Structurally valid but semantically insufficient reconstructions.
\end{itemize}

This leads to the first failure class:
a justification is valid, attestable, and internally consistent, yet insufficient
to support safe execution.

\subsection{The Self-Justification Boundary}

Even when attestation enforces integrity, if the system under control is
responsible for generating its own justification artifacts, a structural
dependency emerges:

\begin{center}
\textit{Execution authority depends on the system's ability to correctly describe
the conditions under which it is allowed to act.}
\end{center}

This creates a failure mode where:
(1)~the system produces a formally valid justification;
(2)~the attestation verifies its integrity;
(3)~execution proceeds;
yet the justification does not fully capture the relevant execution constraints.
This is not a failure of \emph{enforcement}, but of \emph{epistemic completeness}.

\subsection{External State Proofs and the Coverage Problem}

To address self-justification, some architectures introduce externally signed
state proofs (oracles, ledgers).
While this improves trust, it introduces a new limitation:
the system operates against a \emph{provable projection} of reality, not reality
itself.

No external proof system can guarantee complete coverage of all execution-relevant
state, due to:
\begin{itemize}
  \item \textbf{Observability limits}: not all variables are measurable;
  \item \textbf{Latency}: state may change between proof generation and execution;
  \item \textbf{Model constraints}: only predefined dimensions are captured.
\end{itemize}

Thus, even perfect comparison between internal intent and external proof can
result in correct validation against an incomplete state model.

\subsection{Drift Outside the Proof Boundary}

Attestation-based systems typically detect divergence as mismatches between:
(i)~attested state at admission and (ii)~current state at execution.
However, divergence can also arise from state dimensions that were
\emph{never included} in the attested or provable model.

This creates a blind spot: no mismatch is detected, the system remains
``within envelope'', execution continues---despite the fact that the true
execution basis has shifted.

\subsection{Enforcement Without Reconstruction}

Attestation-based governance models are fundamentally \emph{enforcement-oriented}:
they verify consistency, enforce constraints, and terminate on mismatch.
However, they do not inherently address whether execution authority can still
be \emph{constructed} from the current state.

A system may pass all attestation checks, remain within defined envelopes, and
maintain cryptographic validity---and still lack a valid basis for execution
under actual conditions.

\subsection{Formal Limitation}
\label{subsec:formal-limit}

\begin{definition}[State gap]
Let $\Sr(t)$ be the real execution-relevant state at time $t$,
$\Sp(t)$ the provable state accessible to attestation, and
$J(t)$ the justification derived from $\Sp(t)$.
The \emph{state gap} at time $t$ is
$\delta(t) = \Sr(t) \setminus \Sp(t)$.
\end{definition}

Under any attestation model, even when $J(t)$ is valid relative to $\Sp(t)$,
$\Sp(t) \subseteq \Sr(t)$, and all attestation checks pass, it does not follow
that $J(t)$ is valid relative to $\Sr(t)$.

Therefore: \emph{attestation correctness does not imply execution correctness}.

\section{Reconstructive Authority Model}
\label{sec:ram}

\subsection{Core Principle}

\begin{center}
\textit{Execution authority is not a persistent property granted at admission,
but a derivable property that must be continuously reconstructible from the
current state.}
\end{center}

Execution is valid at time $t$ if and only if authority can be constructed
from $\Sr(t)$.
This removes dependency on historical decisions, stored justifications, and
previously attested envelopes.

\subsection{Positioning Against Related Approaches}
\label{subsec:positioning}

RAM must be distinguished from two related but distinct concepts.

\textbf{Conservative halt policies} define a rule: ``halt if uncertainty
exceeds threshold $\theta$.''
This requires a probability measure over states and a tunable threshold.
RAM does not require either: $F$ is not a probabilistic function but a
deterministic constructor that returns $\Ndefval$ when required state
components are missing---not when uncertainty is high.
RAM can halt with certainty (when $F$ is definitively $\text{false}$) and can
execute under uncertainty (when available components suffice to determine
$F = \text{true}$).

\textbf{POMDP-based abstention} requires a belief state $b(s)$ over $\Sr$
and a value function $V(b)$ over beliefs.
RAM requires neither: it operates on the observable state $\hat{S}_r(t)$
directly, returning $\Ndefval$ when authority is not constructible from what
is available---a structural condition, not a probabilistic one.

The contribution of RAM is not ``halt when uncertain'' but:
\emph{define execution authority as a constructible property of the current
state, such that execution is permitted only when authority can be derived}.
This is a different problem from uncertainty quantification or belief-based
decision-making.

\subsection{Coverage Envelope}
\label{subsec:envelope}

\begin{definition}[Coverage Envelope]
The \emph{coverage envelope} $\mathcal{E}(t)$ at time $t$ is the triple:
\[
\mathcal{E}(t) \;=\; \bigl(\Sp(t),\;\mathcal{H}(t),\;\delta(t)\bigr)
\]
where:
\begin{itemize}
  \item $\Sp(t) \subseteq \Sr(t)$: the \emph{proven state}---components that
        are observable and cryptographically attestable at time $t$;
  \item $\mathcal{H}(t)$: \emph{declared assumptions}---explicit propositions
        about unobserved state that the system accepts as operationally
        necessary for authority construction;
  \item $\delta(t) = \Sr(t) \setminus \Sp(t)$: the \emph{acknowledged
        unobservable residual}---the gap whose permanent existence is
        established by Paper~2 (IML,~\cite{fernandez2026c}).
\end{itemize}
Coverage is \emph{adequate for action class $\alpha$} if the authority
constructor can derive $A(t) = \text{true}$ from $\mathcal{E}(t)$ under the
requirements of $\alpha$.
Coverage is \emph{partial for $\alpha$} if it suffices for a strict subset
$\alpha' \subsetneq \alpha$.
Coverage is \emph{insufficient} if no $\alpha' \subseteq \alpha$ can be
authorized.
\end{definition}

\begin{remark}
Making $\delta(t)$ explicit in the envelope is not merely bookkeeping: it
forces system designers to acknowledge \emph{what they do not know} at each
execution step, rather than silently assuming the provable state is complete.
This is the operational consequence of IML's impossibility result.
\end{remark}

\subsection{Formal Definition}

\begin{definition}[Reconstructive Authority]
Let $\Sr(t)$ be the real execution-relevant state at time $t$,
$\mathcal{A}$ the space of action classes, $A(t)$ the execution authority
at time $t$, and
$F: \mathcal{E} \times \mathcal{A} \to \mathcal{A} \cup \{\text{false}, \Ndefval\}$
the authority construction function over coverage envelopes.
Then:
\[
A(t) = F\!\left(\mathcal{E}(t),\, \alpha\right)
\]
where $\alpha$ is the requested action class.
\begin{itemize}
  \item If $F(\mathcal{E}(t), \alpha) = \alpha$: execute at full scope;
  \item If $F(\mathcal{E}(t), \alpha) = \alpha' \subsetneq \alpha$:
        execute with narrowed privileges ($\alpha'$ only);
  \item If $F(\mathcal{E}(t), \alpha) \in \{\text{false}, \Ndefval\}$:
        halt or defer.
\end{itemize}
The value $\Ndefval$ represents \emph{insufficient information} to determine
authority for any sub-class; $\text{false}$ represents a definitive refusal.
\end{definition}

The inclusion of $\Ndefval$ is critical: when the observable state is
insufficient to establish authority, RAM mandates halting rather than
defaulting to execution.

\subsection{Non-Persistence of Authority}

Unlike admission-based models, authority in RAM is non-persistent:

\begin{proposition}[Non-persistence]
\[
A(t_0) = \text{true} \;\not\Rightarrow\; A(t_1) = \text{true} \quad (t_1 > t_0)
\]
even when no explicit policy violation is detected between $t_0$ and $t_1$.
\end{proposition}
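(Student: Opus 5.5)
The non-implication is a negative statement, so the plan is to exhibit a single counterexample: a system, two times $t_0 < t_1$, and an authority constructor $F$ such that $A(t_0)=\text{true}$ (equivalently $F(\mathcal{E}(t_0),\alpha)=\alpha$), $A(t_1)\neq\text{true}$, and no policy violation is detected on $[t_0,t_1]$. "Detected" will be read in the attestation sense of Section~\ref{sec:limits}: every check over the provable state $\Sp$ passes. The construction will exploit exactly the gap $\delta(t)=\Sr(t)\setminus\Sp(t)$ identified there. It will also use the fact that $A(t)$ is defined as $F(\mathcal{E}(t),\alpha)$ evaluated on the \emph{current} envelope, with no reference to $A(t_0)$.

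The concrete model is minimal. Take execution-relevant state variables $x$ and $y$. Let $\Sp(t)=\{x\}$ and $\delta(t)=\{y\}$ for all $t$, so $y$ is never attested, consistent with the permanent residual guaranteed by IML~\cite{fernandez2026c}. Fix a single action class $\alpha$, and let $F$ return $\alpha$ when $x=1$ and the declared assumption $\mathcal{H}(t)=\{y=0\}$ is still supportable. Let $F$ return $\Ndefval$ (or false) when that assumption has been invalidated, for instance because $y$ became observable at the execution boundary with $y=1$. Set $x=1,y=0$ at $t_0$, giving $A(t_0)=\alpha$, i.e.\ true. At $t_1$, keep $x=1$ but let drift drive $y$ to $1$.

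The next step checks the two conclusions. Any attestation or policy monitor is a function of $\Sp$ alone, and $\Sp(t_0)=\Sp(t_1)$. So every check that passed at $t_0$ passes identically at $t_1$, and no violation is detected; this is the "drift outside the proof boundary" of Section~\ref{sec:limits}. Meanwhile $F(\mathcal{E}(t_1),\alpha)\in\{\text{false},\Ndefval\}$, so $A(t_1)\neq\text{true}$. A one-line remark will note that the argument does not depend on this particular $F$. Whenever $F$ depends nontrivially on a component of $\delta$ (or of $\mathcal{H}$), the same two-time construction applies. Authority is therefore recomputed rather than inherited.

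The main obstacle is the modeling subtlety that $F$ is defined on envelopes, while $\delta$ is by definition unobservable. If $F$ could never see a change in $y$, its output could not change, and the example would collapse. I would resolve this by letting the change at $t_1$ enter through $\mathcal{H}$: the declared assumption about $y$ becomes unsupportable, for example because its validity window expires or an out-of-band signal appears. This changes the envelope $\mathcal{E}(t_1)$ without changing $\Sp$. If that reading seems too loose, a fallback is available: let the definition $A(t)=F(\mathcal{E}(t),\alpha)$ with $\Sr(t)$-dependence carry the argument directly, since the core principle states that validity is judged relative to $\Sr(t)$.
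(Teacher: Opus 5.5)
Your counterexample is correct and uses the same idea the paper relies on for this proposition, which it states without a separate proof: a component of the residual $\delta(t)$ drifts between $t_0$ and $t_1$ while $\Sp$ stays fixed, so every attestation check passes but authority no longer reconstructs (see Section~\ref{sec:limits}, Case~B of Section~\ref{sec:case}, and the construction in Lemma~A.2). You route the change into $F(\mathcal{E}(t),\alpha)$ through a lapsing assumption in $\mathcal{H}(t)$, and this is more careful than the paper, which moves freely between $F(\mathcal{E}(t),\alpha)$, $F(\hat{S}_r(t))$ and $F(\Sr(t))$; the expiring-validity-window version is the cleanest choice, since it gives $\Ndefval$ without any violating observation at all.
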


Authority must be recomputed at each execution step, not assumed from prior
successful computations.
This is the key operational difference from attestation-based models.

\subsection{Invariants as Constructive Conditions}

In RAM, invariants are not validation checks but \emph{constructive requirements}.

Instead of verifying ``invariant holds,'' we define:
$F(\St{t})$ is defined only if invariants hold over $\St{t}$.
If invariants fail, $F$ becomes undefined ($\Ndefval$), authority cannot be constructed,
and execution must halt.

This shifts the semantics from \emph{reactive} (detect violation, then stop)
to \emph{constructive} (establish validity, then proceed).

\subsection{Partial Observability Handling}

RAM explicitly acknowledges that $\Sr(t)$ is never fully observable.
Authority construction therefore operates under bounded uncertainty with
explicit incompleteness:

\[
F(\St{t}) = \Ndefval \;\text{ if required components of } \Sr(t) \text{ are unknown.}
\]

The default behavior is \textbf{halt or defer}, not continue under partial
justification.
This is the conservative assumption established in Paper~2 (IML)~\cite{fernandez2026c}
as the only safe default under finite observability.

\subsection{Drift Interpretation}

In attestation-based models, drift is defined as a detectable mismatch between
states at different points in time.
Under RAM, drift is not defined as a mismatch but as \emph{inability to
reconstruct authority from current conditions}:

\[
\text{Drift at } t \;\equiv\; F(\St{t}) \in \{\text{false}, \Ndefval\}
\]

This eliminates the need to define, detect, or classify drift explicitly.
Instead, drift manifests automatically as an authority failure.

\subsection{Execution Model}

At each execution step, the RAM execution loop is:

\begin{enumerate}[leftmargin=*, label=(\arabic*)]
  \item Observe current state $\hat{S}_r(t) \subseteq \Sr(t)$ (partial, bounded);
  \item Attempt to compute $A(t) = F(\hat{S}_r(t))$;
  \item If $A(t) = \text{true}$: proceed with execution;
  \item If $A(t) \in \{\text{false}, \Ndefval\}$: halt or re-resolve.
\end{enumerate}

No execution step occurs without a fresh authority computation.
Historical authority grants are not consulted.

\section{Theoretical Results}
\label{sec:theory}

\subsection{Theorem 1: Limits of Attestation Without Reconstruction}

\begin{theorem}[Attestation insufficiency]
\label{thm:attestation-insufficiency}
No attestation-based governance system can guarantee execution validity with
respect to the real execution state $\Sr$ unless execution authority is
reconstructible from $\Sr(t)$ at runtime.
\end{theorem}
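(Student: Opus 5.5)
The plan is to prove the theorem by contraposition, using an explicit counterexample construction. Suppose an attestation-based governance system $G$ derives its execution decision only from the provable state. That is, there is a decision map $D$ with $D(t) = D(\Sp(t), J(t))$, where $J(t)$ is computed from $\Sp(t)$, and authority is not recomputed from $\Sr(t)$. I will show that $G$ cannot guarantee execution validity with respect to $\Sr$.

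The first step is to formalize "execution validity with respect to $\Sr$" as a predicate $V(\Sr(t), \alpha)$, and "guarantee" as the requirement that $D(t) = \text{true}$ implies $V(\Sr(t),\alpha)$ for every admissible real state. The key structural input is that $\Sp(t) \subsetneq \Sr(t)$, so $\delta(t) \neq \emptyset$. This is the gap established by Paper~2 (IML) and recorded in the coverage envelope of Definition~\ref{subsec:envelope}. I also use the standing assumption from Section~\ref{subsec:formal-limit} that $\delta(t)$ can contain execution-relevant components.

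The construction then proceeds in three moves.
\begin{enumerate}[leftmargin=*, label=(\arabic*)]
  \item Choose two real states $\Sr^{1}(t)$ and $\Sr^{2}(t)$ that agree on $\Sp(t)$ and differ only on a component $x \in \delta(t)$.
  \item Choose the values of $x$ so that $V(\Sr^{1}(t),\alpha)$ holds and $V(\Sr^{2}(t),\alpha)$ fails. This is possible because $x$ is execution-relevant: it lies in the set defining validity, so validity cannot be independent of it.
  \item Observe that $D$ factors through $\Sp$, so $D$ returns the same verdict on both states.
\end{enumerate}
If that verdict is $\text{true}$, then $G$ executes invalidly on $\Sr^{2}$. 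If it is $\text{false}$ on every such pair, then $G$ refuses a valid execution. Either way, $G$ cannot be both sound and non-vacuous.

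The same argument covers time drift. Even when $D(t_0)=\text{true}$ is correct at admission, the non-persistence proposition lets $x$ change between $t_0$ and $t_1$ while $\Sp$ stays fixed. The attestation check then passes again at $t_1$, yet execution is invalid.

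For the converse direction ("unless reconstructible"), I will show that if $A(t) = F(\mathcal{E}(t),\alpha)$ is recomputed at runtime, then the gap is handled. When the required components of $\Sr(t)$ are unknown, $F$ returns $\Ndefval$, so the decision no longer factors through a fixed projection blind to $x$. In that case the system halts, or narrows to some $\alpha'$ whose validity does not depend on $x$. This gives soundness, and the dichotomy in the statement follows.

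The main obstacle is step (2): justifying that some component of $\delta(t)$ really does affect validity. Without this, the theorem is false in the degenerate case where $\Sp$ happens to be a sufficient statistic for $V$. I would handle this by making the hypothesis explicit. "Execution-relevant" will be read as "$V$ is not measurable with respect to $\Sp$" for at least one reachable configuration, citing IML for the existence of such undetectable violation paths. I would also note that the guarantee is only claimed over the general class of systems with dynamic, partially observable state.
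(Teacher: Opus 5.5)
Your two-state construction is essentially the paper's Lemma~A.2: hold $\Sp(t)$ fixed and set a component $s^*\in\delta(t)$ to an invalidating value. The way you close the argument, however, does not yield the theorem. By your own definition, ``guarantee'' means soundness, $D(t)=\text{true}\Rightarrow V(\Sr(t),\alpha)$. The second horn of your dichotomy has $D$ refusing on the common fiber of $\Sr^{1}(t)$ and $\Sr^{2}(t)$. That gives up a valid execution but never permits an invalid one, so it does not violate the guarantee. Indeed, a map that factors through $\Sp$, refuses on every fiber where $V$ is not determined by $\Sp$, and approves on fibers where $V$ holds throughout is sound. It is also non-vacuous as soon as one uniformly valid fiber exists. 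So your concluding sentence is false as stated: the dichotomy only shows ``not both sound and complete.''

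The missing ingredient is the one the paper writes into its counterexample. Step~(i) of the proof (condition~(a) of Lemma~A.2) stipulates $G(\Sp(t))=\text{true}$ because no violation is detectable in $\Sp(t)$. In other words, attestation-based systems \emph{continue} when the provable state shows nothing wrong (the ``Default under uncertainty'' row of Table~\ref{tab:comparison}). Make that assumption explicit. Then your step~(3) forces $D=\text{true}$ on $\Sr^{2}(t)$, and only the first horn survives.

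Your drift paragraph already has this form: $D(t_0)=\text{true}$ on a valid state, plus factoring through an unchanged $\Sp$, forces approval at $t_1$. It is the cleaner version to lead with. Your explicit hypothesis that $V$ is not $\Sp$-measurable at some reachable configuration is a fair sharpening of the paper's ``$F$ is sensitive to a component not in $\Sp(t)$.'' Under default-continue it suffices, since $\Sp$ alone reveals no violation on such a fiber.

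Your converse paragraph contains a related error. RAM's constructor is evaluated on the observable $\hat{S}_r(t)$, so its decision also factors through a projection blind to an unobservable $x$. RAM is sound not because it escapes that projection, but because it returns $\Ndefval$ and halts on exactly the ambiguous fibers. RAM therefore realizes your second horn, which confirms that this horn cannot be counted as a failure. If you would rather not assume default-continue, the alternative repair is to read the ``unless reconstructible'' clause as covering precisely those systems that refuse whenever $\Sp$ does not determine validity. Either way, the converse is not part of Theorem~\ref{thm:attestation-insufficiency}; the paper defers it to Theorem~\ref{thm:necessity} and the closing lines of the appendix proof.
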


\paragraph{Formal setup.}
Let $\Sp(t) \subseteq \Sr(t)$ be the provable state,
$J(t)$ the justification derived from $\Sp(t)$,
$\Ap(t) = G(\Sp(t))$ the attestation-based authority decision for some
deterministic function $G$, and $\Ar(t) = F(\Sr(t))$ the real authority
decision.
Execution validity requires $\Ar(t) = \text{true}$.

\paragraph{Assumptions.}
\begin{assumption}[Partial observability]
$\Sp(t) \subsetneq \Sr(t)$ in general (strict inclusion).
\end{assumption}
\begin{assumption}[Attestation correctness]
All attestation mechanisms correctly verify the integrity of $J(t)$ and the
authenticity of $\Sp(t)$.
\end{assumption}
\begin{assumption}[Deterministic evaluation]
$\Ap(t) = G(\Sp(t))$ for some deterministic decision function $G$.
\end{assumption}

\begin{proof}
Since $\Sp(t) \subsetneq \Sr(t)$, there exists at least one component
$\delta(t) \in \Sr(t) \setminus \Sp(t)$.
This component represents execution-relevant state not captured in the
attested model.

Construct the following case:
\begin{enumerate}[leftmargin=*, label=(\roman*)]
  \item $G(\Sp(t)) = \text{true}$ (no violation detectable in provable state);
  \item $\delta(t)$ invalidates execution under real conditions, i.e.,
        $F(\Sr(t)) = \text{false}$.
\end{enumerate}

Such a $\delta(t)$ exists by construction whenever the state space admits
conditions that are execution-relevant but not covered by $\Sp(t)$.

Under these conditions: $\Ap(t) = \text{true}$ and $\Ar(t) = \text{false}$.

Since attestation operates only over $\Sp(t)$, it cannot detect $\delta(t)$.
Therefore, execution proceeds under attestation while being invalid under
real state.

Attestation correctness over $\Sp(t)$ does not imply execution validity
over $\Sr(t)$.
\end{proof}

\begin{corollary}[Enforcement limitation]
\label{cor:enforcement}
Even with perfect enforcement (e.g., TEE termination on detected mismatch):
if no mismatch is observable in $\Sp(t)$, no enforcement is triggered.
Therefore, enforcement cannot prevent invalid execution when the invalidating
condition lies outside the provable state.
\end{corollary}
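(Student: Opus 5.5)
The corollary follows from Theorem~\ref{thm:attestation-insufficiency} by observing that enforcement is itself a function of the provable state. I will model a perfect enforcement mechanism as a deterministic predicate $M(\Sp(t_0), \Sp(t))$ that compares the attested state at admission with the provable state at execution time. It terminates execution exactly when $M$ reports a mismatch. Perfect enforcement means the mechanism never misses a mismatch that $M$ can see, and termination on a detected mismatch is guaranteed. This matches the TEE example and the Deterministic evaluation assumption. The key structural fact is that $M$, like $G$, depends on $\Sr(t)$ only through the projection $\Sp(t)$.

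The argument then runs in three steps. First, I reuse the counterexample constructed in the proof of Theorem~\ref{thm:attestation-insufficiency}. There is a component $\delta(t)\in\Sr(t)\setminus\Sp(t)$ such that $G(\Sp(t))=\text{true}$ but $F(\Sr(t))=\text{false}$. Second, I strengthen the counterexample so that the provable state is unchanged, $\Sp(t)=\Sp(t_0)$, while only $\delta$ changes between $t_0$ and $t$. Let $\Sr(t_0)$ and $\Sr(t)$ agree on every component in $\Sp$ but differ on $\delta$, with $F(\Sr(t_0))=\text{true}$ and $F(\Sr(t))=\text{false}$. This is exactly the ``drift outside the proof boundary'' situation. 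It is realizable whenever $\delta$ is execution-relevant, which is how the theorem's proof obtains $\delta$. Third, because $M$ is a function of $\Sp$ alone, $M(\Sp(t_0),\Sp(t))=M(\Sp(t_0),\Sp(t_0))$, which reports no mismatch. So no enforcement is triggered, execution proceeds, and yet $\Ar(t)=\text{false}$. The same reasoning extends to any enforcement rule, however strong, that is measurable with respect to $\Sp$: two real states with identical projections receive identical enforcement verdicts, so the rule cannot separate the valid run from the invalid one.

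The main obstacle is the second step: justifying that the scenario is realizable rather than merely asserted. For this I will use the indistinguishability pair, two real states with equal projections but different $F$-values, which is the content of $\delta$ being ``execution-relevant but not covered by $\Sp(t)$.'' I will state this existence as the precise hypothesis under which the corollary holds. In degenerate systems where $\delta$ never affects $F$, no such pair exists and the corollary is vacuous, and I will note this explicitly.
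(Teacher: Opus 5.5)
Your proposal is correct and follows essentially the paper's route: the corollary comes directly from the counterexample of Theorem~\ref{thm:attestation-insufficiency}, together with the observation that enforcement, like $G$, sees the real state only through $\Sp(t)$. Hence an invalidating component in $\delta(t)$ produces no mismatch and triggers no termination. Your $\Sp$-measurability and indistinguishability-pair formulation makes explicit, and slightly generalizes, what the paper leaves informal. One small point: the extra requirement $F(\Sr(t_0))=\text{true}$ is harmless but not needed, since a single real state with $G=\text{true}$ on its projection and $F=\text{false}$ already suffices.
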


\begin{corollary}[Oracle limitation]
\label{cor:oracle}
Extending $\Sp(t)$ via external proofs (oracles, ledgers) reduces but does
not eliminate the state gap:
\[
\Sp^{\text{oracle}}(t) = \Sp(t) \cup \Sp^{\text{ext}}(t) \subsetneq \Sr(t)
\]
Therefore, no finite proof system can guarantee full coverage of $\Sr(t)$.
\end{corollary}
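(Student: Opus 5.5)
The plan is to reduce Corollary~\ref{cor:oracle} to the partial-observability assumption together with the argument of Theorem~\ref{thm:attestation-insufficiency}. The proof has two parts. First we show that the extended provable state $\Sp^{\text{oracle}}(t) = \Sp(t) \cup \Sp^{\text{ext}}(t)$ is still a strict subset of $\Sr(t)$. Second we show that, because of this strictness, the counterexample construction of Theorem~\ref{thm:attestation-insufficiency} goes through unchanged with $\Sp$ replaced by $\Sp^{\text{oracle}}$.

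For the first part, I would formalize what a \emph{finite proof system} is. It is a finite collection of oracles or ledgers $O_1,\dots,O_k$. Each $O_i$ signs a predefined, finite set of state dimensions $D_i$, and its values are fixed at proof-generation time $t_i \le t$. Then
\[
\Sp^{\text{ext}}(t) \subseteq \bigcup_{i=1}^{k} \{\, x \in D_i : x \text{ unchanged on } [t_i, t] \,\}.
\]
Strict inclusion follows from the three coverage obstructions of Section~\ref{sec:limits}:
\begin{itemize}[leftmargin=*]
  \item \textbf{Model constraints.} $\bigcup_i D_i$ is fixed in advance, while $\Sr(t)$ is dynamic and may acquire execution-relevant components outside any $D_i$.
  \item \textbf{Latency.} Any component that changes in $(t_i, t]$ is not correctly covered at time $t$.
  \item \textbf{Observability limits.} By Paper~2 (IML,~\cite{fernandez2026c}), the residual $\delta(t)$ is permanently nonempty for any finite-memory, locally observing architecture. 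The oracle-extended system is such an architecture.
\end{itemize}
Any one of these three suffices, so $\Sr(t) \setminus \Sp^{\text{oracle}}(t) \neq \emptyset$. Monotonicity, $\Sp(t) \subseteq \Sp^{\text{oracle}}(t)$, gives the ``reduces'' half: $\delta^{\text{oracle}}(t) \subseteq \delta(t)$.

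For the second part, take $G'$ to be the deterministic decision function over $\Sp^{\text{oracle}}(t)$, and pick a component $\delta'(t) \in \Sr(t) \setminus \Sp^{\text{oracle}}(t)$. We then repeat the two-step construction of Theorem~\ref{thm:attestation-insufficiency}: choose a state with $G'(\Sp^{\text{oracle}}(t)) = \text{true}$ while $\delta'(t)$ forces $F(\Sr(t)) = \text{false}$. Corollary~\ref{cor:enforcement} then shows that no enforcement is triggered. So full coverage, in the sense of validity being decidable from proofs alone, is not guaranteed.

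The main obstacle is the first part. Strictness is not a theorem about arbitrary set unions: a sufficiently rich oracle could in principle sign every component of a static, finite $\Sr$. The argument must therefore lean on the hypotheses ``general system with partial observability and dynamic state,'' and I would state them explicitly. I would first try to make the argument rest on IML alone, treating the oracle-augmented system as another finite-observation architecture to which IML's nonempty-residual result applies. The latency and dynamic-dimension arguments would serve as a fallback for settings where IML's hypotheses are in doubt.
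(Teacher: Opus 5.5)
Your proposal is correct and follows essentially the paper's route. The paper gives the corollary no separate proof; it justifies it by applying the partial-observability assumption (Lemma~A.1 via IML, for any finite collection of observations) to the oracle-augmented state, backed by the observability, latency and model-constraint obstructions of Section~\ref{sec:limits}, so that the counterexample of Theorem~\ref{thm:attestation-insufficiency} carries over with $\Sp^{\text{oracle}}$ in place of $\Sp$. Your caveat that strictness holds only \emph{in general} (dynamic state, finitely many fixed dimensions), so the conclusion is a failure of \emph{guarantee}, is the hypothesis the paper leaves implicit.
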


\begin{corollary}[Justification limitation]
\label{cor:justification}
Even when $J(t)$ is complete relative to $\Sp(t)$ and correctly attested,
it may be incomplete relative to $\Sr(t)$, yielding a structurally valid
justification for an invalid execution.
\end{corollary}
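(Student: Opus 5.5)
The statement immediately above is Corollary~\ref{cor:justification}. I would derive it from Theorem~\ref{thm:attestation-insufficiency} by a direct instantiation of the construction used there, this time keeping track of the justification artifact $J(t)$ itself. The plan has three parts: make precise what ``complete relative to $\Sp(t)$'' means, exhibit a state pair on which such a $J(t)$ exists and passes attestation, and show that on the same pair execution is invalid.

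First, I would define completeness of $J(t)$ relative to a state $S$. Say $J$ is complete for $S$ if it records every component of $S$ on which the authority decision depends, so that $G$ (or $F$) evaluated on the content of $J$ agrees with its value on $S$. Because $J(t)$ is derived from $\Sp(t)$, completeness relative to $\Sp(t)$ gives $G(J(t)) = G(\Sp(t))$. By the attestation-correctness assumption, the integrity of $J(t)$ and the authenticity of $\Sp(t)$ are verified, so $J(t)$ is \emph{structurally valid}.

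Second, I would reuse the case built in the proof of Theorem~\ref{thm:attestation-insufficiency}. By partial observability, choose a residual component $\delta(t) \in \Sr(t)\setminus\Sp(t)$. Then fix a value of $\delta(t)$ such that $G(\Sp(t)) = \text{true}$ while $F(\Sr(t)) = \text{false}$. Since $J(t)$ is a function of $\Sp(t)$ only, it cannot mention $\delta(t)$. A second real state $\Sr'(t)$ with the same projection $\Sp(t)$ and $F(\Sr'(t)) = \text{true}$ yields the identical $J(t)$. This indistinguishability argument shows that $J(t)$ cannot be complete relative to $\Sr(t)$: completeness would force the evaluation on $J(t)$ to agree with both $F(\Sr(t))$ and $F(\Sr'(t))$, which differ. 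Deterministic evaluation then gives $\Ap(t) = \text{true}$, so execution proceeds on a valid, attested justification while $\Ar(t) = \text{false}$.

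The main obstacle is the existence claim in the second step. I need $\delta(t)$ to be genuinely execution-relevant, that is, $F$ must actually depend on the residual. The theorem's proof only asserts this ``by construction.” I would state it explicitly as a non-degeneracy hypothesis: there are real states agreeing on $\Sp$ but differing in $F$. I would also note that without this hypothesis the corollary is vacuous, which is why the statement says ``may be incomplete'' rather than ``is incomplete.''
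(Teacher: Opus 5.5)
Your argument is correct and follows the paper's route: the corollary is read off the counterexample of Theorem~\ref{thm:attestation-insufficiency} (made explicit in Lemma~A.2). Your second state $\Sr'(t)$ plays the role of the paper's condition~(c), $F(\Sp(t))=\text{true}$, and your non-degeneracy hypothesis is the paper's proviso that $F$ be sensitive to some component outside $\Sp(t)$. One small tightening: completeness relative to $\Sr(t)$ alone does not force agreement with $F(\Sr'(t))$, so your indistinguishability step only shows that $J(t)$ fails for one of the two states; to place the failure at the invalid state, note directly that the decision read off $J(t)$ is $\text{true}$ while $F(\Sr(t))=\text{false}$.
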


\subsection{Theorem 2: Necessity of Reconstruction}

\begin{theorem}[Necessity of RAM]
\label{thm:necessity}
Execution validity can only be guaranteed if execution authority is defined
as a function over $\Sr(t)$:
\[
A(t) = F\!\left(\St{t}\right)
\]
and execution is permitted if and only if $F(\St{t}) = \text{true}$.
Any system that relies solely on $\Sp(t)$ or on historical attestation
cannot guarantee $\Ar(t) = \text{true}$.
\end{theorem}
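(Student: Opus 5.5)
The proof splits into two directions: sufficiency of gating on $F(\St{t})$, and necessity via Theorem~\ref{thm:attestation-insufficiency}. Throughout, execution validity at time $t$ means $\Ar(t) = F(\St{t}) = \text{true}$, exactly as in the formal setup preceding Theorem~\ref{thm:attestation-insufficiency}.

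For \textbf{sufficiency}, suppose authority is defined as $A(t) = F(\St{t})$ and execution is gated on $A(t) = \text{true}$. The argument is a single step: execution occurs only if $A(t) = \text{true}$, hence only if $\Ar(t) = \text{true}$. No execution step is invalid. The clause ``if and only if'' also requires permitting execution whenever $F(\St{t}) = \text{true}$. This holds trivially by the definition of the gate, and it only concerns liveness, not validity. I will also invoke the Non-persistence proposition to justify re-evaluating $F$ at every $t$. Without it, a system could carry forward $A(t_0) = \text{true}$ to a later $t_1$, where $F(\St{t_1})$ may be false.

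For \textbf{necessity}, consider any decision rule $\Ap(t) = G(\Sp(t))$ depending only on $\Sp(t)$. I reduce to Theorem~\ref{thm:attestation-insufficiency}. By the Partial observability assumption there is a component $\delta(t) \in \Sr(t) \setminus \Sp(t)$. I then exhibit two real states $\Sr, \Sr'$ with the same projection $\Sp$ but with $F(\Sr) = \text{true}$ and $F(\Sr') = \text{false}$. The deterministic $G$ returns the same value on both. If that value is $\text{true}$, execution is invalid on $\Sr'$. If it is $\text{false}$, then $G$ never permits on this fibre and the rule is not a faithful authority decision. In either case, no $G$ guarantees $\Ar(t) = \text{true}$ whenever it permits execution while remaining non-trivial. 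Historical attestation is handled the same way, since a stored grant is a function of $\Sp(t_0)$ for some earlier time $t_0$. The previous argument applies with $\Sp(t_0)$ in place of $\Sp(t)$, combined with non-persistence: $\Sr(t)$ can change between $t_0$ and $t$ while the stored input stays fixed. Corollaries~\ref{cor:enforcement} and~\ref{cor:oracle} extend this to enforcement and oracle-augmented projections, since these still yield a strict sub-projection of $\Sr(t)$.

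The \textbf{main obstacle} is the existence of the two states agreeing on $\Sp$ but disagreeing on $F$. Strict inclusion $\Sp \subsetneq \Sr$ alone does not give it, because the residual $\delta$ might be irrelevant to $F$. I will make this an explicit non-degeneracy hypothesis: $F$ is not measurable with respect to the $\Sp$-projection. This matches the phrase ``execution-relevant'' in the definition of $\delta(t)$ and the IML result of Paper~2. I will also note that the only way $G$ can avoid the trivial always-halt rule is to refuse exactly when $F$ is undetermined by $\Sp$. That behaviour is precisely the RAM gate over $\mathcal{E}(t)$ returning $\Ndefval$, so the RAM form is forced.
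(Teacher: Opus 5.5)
Your proposal follows the paper's own route: reduce to Theorem~\ref{thm:attestation-insufficiency} to obtain a real state on which an $\Sp$-only rule permits execution while $\Ar(t)=\text{false}$, then conclude that any rule guaranteeing validity under partial observability must return $\Ndefval$ and halt whenever the observation does not determine $F$. Your version is tighter than the paper's sketch: the two-state fibre argument covering both values of $G$, the explicit non-degeneracy hypothesis (which the paper mentions only in passing in Lemma~A.2), and the non-persistence treatment of historical grants are all absent there. However, you should state the conclusion of your necessity step as ``no $\Sp$-only rule is both sound and complete,'' because your own closing remark shows that a sound, non-trivial $\Sp$-only rule exists (the $\Ndefval$ gate).
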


\begin{proof}[Proof sketch]
By Theorem~\ref{thm:attestation-insufficiency}, $\exists\, \Sr(t)$ such that
$\Ap(t) = \text{true}$ but $\Ar(t) = \text{false}$.
Therefore, attestation-based authority $\Ap$ is not equivalent to real
authority $\Ar$.

The only decision function that is guaranteed to agree with $\Ar(t)$ is one
that takes $\Sr(t)$ as input directly.
Since $\Sr(t)$ is not fully observable in general, such a function must
return $\Ndefval$ when the available observation $\hat{S}_r(t) \subsetneq \Sr(t)$
is insufficient to determine $\Ar(t)$.
The conservative default under $\Ndefval$ is halt, which avoids invalid execution
at the cost of potentially halting valid executions.
\end{proof}

\begin{remark}
Theorem~\ref{thm:necessity} establishes that the conservative halting behavior
of RAM is not a design choice but a logical consequence of the partial
observability constraint proved in Paper~2 (IML)~\cite{fernandez2026c}.
Systems that wish to avoid invalid execution under partial observability
\emph{must} be willing to halt under uncertainty.
\end{remark}

\section{Contrast: RAM vs Attestation-Based Models}
\label{sec:contrast}

\begin{table}[H]
\centering
\caption{RAM vs Attestation-Based Models: structural comparison}
\label{tab:comparison}
\begin{tabular}{lll}
\toprule
\textbf{Property} & \textbf{Attestation-Based} & \textbf{RAM} \\
\midrule
Authority source      & Derived from $\Sp(t)$              & Derived from $\Sr(t)$ \\
Role of history       & Strong (admission, envelopes)       & None (history irrelevant) \\
Drift handling        & Detect mismatch vs.\ attested state & Authority fails to reconstruct \\
Failure mode          & Detected divergence                 & Undefined authority ($\Ndefval$) \\
State basis           & $\Sp \subseteq \Sr$ (provable) & $\Sr(t)$ (actual, possibly partial) \\
Invariant role        & Validation checks                   & Constructive conditions \\
Default under uncertainty & Continue                       & Halt \\
Privilege adaptation  & Static (all-or-nothing)      & Dynamic narrowing ($\alpha \!\to\! \alpha'$) \\
Core limitation       & Incomplete state coverage           & Conservative halting \\
\bottomrule
\end{tabular}
\end{table}

\paragraph{Philosophical difference.}

\medskip
\begin{center}
\begin{tabular}{p{0.45\textwidth}p{0.45\textwidth}}
\textbf{Attestation-Based:} & \textbf{RAM:} \\
``Ensure execution remains consistent with what was approved.'' &
``Allow execution only if it can be justified \emph{now}.'' \\
\end{tabular}
\end{center}
\medskip

\paragraph{Bridging insight.}
These models are not mutually exclusive.
Attestation provides integrity, enforcement, and tamper resistance.
RAM provides semantic correctness, runtime validity, and authority grounding.
A robust system requires both: attestation to guarantee \emph{how} computation
executes, RAM to determine \emph{whether} it should execute at all.

\begin{center}
\textit{Attestation can guarantee that a system executes correctly.\\
RAM determines whether it should execute at all.}
\end{center}

\section{System Architecture}
\label{sec:arch}

\subsection{Component Overview}

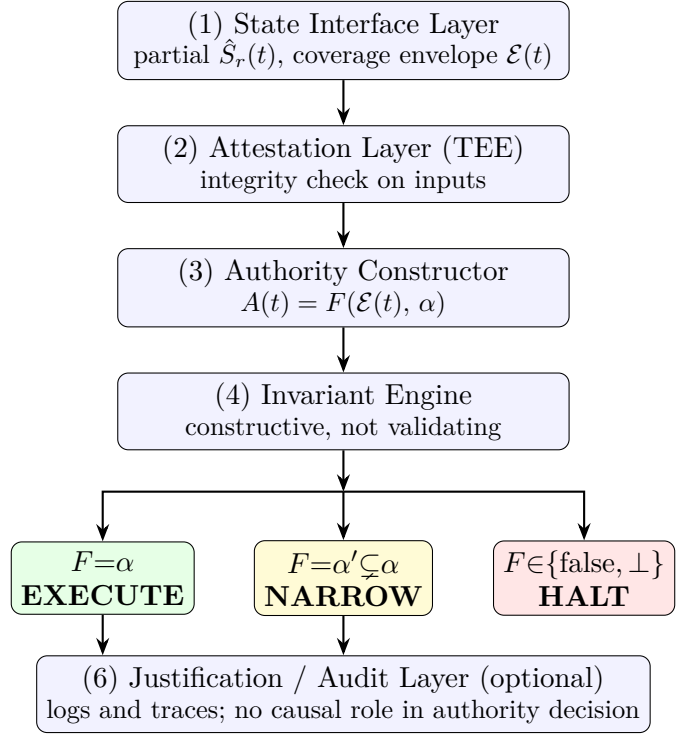
\begin{figure}[H]
\centering
\begin{tikzpicture}[
  box/.style={draw, rounded corners=4pt, minimum width=6cm, minimum height=0.9cm,
              align=center, fill=blue!5},
  gate/.style={draw, rounded corners=4pt, minimum width=2.3cm, minimum height=0.9cm,
               align=center, fill=green!10},
  narr/.style={draw, rounded corners=4pt, minimum width=2.3cm, minimum height=0.9cm,
               align=center, fill=yellow!20},
  halt/.style={draw, rounded corners=4pt, minimum width=2.3cm, minimum height=0.9cm,
               align=center, fill=red!10},
  arr/.style={-Stealth, thick},
  node distance=0.55cm
]
\node[box] (state)  {(1) State Interface Layer\\[-2pt]
                      {\small partial $\hat{S}_r(t)$, coverage envelope $\mathcal{E}(t)$}};
\node[box, below=0.6cm of state]  (attest) {(2) Attestation Layer (TEE)\\[-2pt]
                      {\small integrity check on inputs}};
\node[box, below=0.6cm of attest] (constr) {(3) Authority Constructor\\[-2pt]
                      {\small $A(t) = F(\mathcal{E}(t),\,\alpha)$}};
\node[box, below=0.6cm of constr] (invar)  {(4) Invariant Engine\\[-2pt]
                      {\small constructive, not validating}};

\coordinate (split) at ($(invar.south)+(0,-0.55)$);

\node[gate] (exec)   at ($(split)+(-3.2,-1.15)$)
  {$F{=}\alpha$\\[-1pt]\textbf{EXECUTE}};
\node[narr] (narrow) at ($(split)+(0,-1.15)$)
  {$F{=}\alpha'{\subsetneq}\alpha$\\[-1pt]\textbf{NARROW}};
\node[halt]  (halt)   at ($(split)+(+3.2,-1.15)$)
  {$F{\in}\{\text{false},\Ndefval\}$\\[-1pt]\textbf{HALT}};

\node[box, minimum width=6cm]
  (audit) at ($(exec.south)!0.5!(halt.south)+(0,-1.05)$)
  {(6) Justification / Audit Layer (optional)\\[-2pt]
   {\small logs and traces; no causal role in authority decision}};

\draw[arr] (state)  -- (attest);
\draw[arr] (attest) -- (constr);
\draw[arr] (constr) -- (invar);

\draw[arr] (invar.south) -- (split);
\draw[arr] (split) -| (exec.north);
\draw[arr] (split)  -- (narrow.north);
\draw[arr] (split) -| (halt.north);

\draw[arr] (exec.south) -- (exec.south |- audit.north);
\draw[arr] (narrow.south) -- (audit.north);
\end{tikzpicture}
\caption{RAM reconstruction gate with three outcomes.
The authority constructor $F(\mathcal{E}(t),\alpha)$ evaluates the coverage
envelope against the requested action class $\alpha$:
full coverage permits execution; partial coverage narrows privileges to
$\alpha' \subsetneq \alpha$; insufficient coverage fails closed.
Attestation (layer 2) guarantees input integrity but does not determine authority.}
\label{fig:arch}
\end{figure}

\subsection{Execution Flow}

\begin{enumerate}[leftmargin=*, label=(\arabic*)]
  \item Observe $\hat{S}_r(t)$ (partial, bounded) and construct the coverage
        envelope $\mathcal{E}(t) = (\Sp(t), \mathcal{H}(t), \delta(t))$
        via the State Interface Layer.
  \item Verify input integrity via the Attestation Layer (TEE enclave).
        If attestation fails: halt immediately---do not proceed to step 3.
  \item Evaluate $A(t) = F(\mathcal{E}(t), \alpha)$ for the requested
        action class $\alpha$: apply invariants as constructive preconditions.
  \item Reconstruction Gate routes on outcome (see Figure~\ref{fig:arch}):
        \begin{itemize}
          \item $F = \alpha$: execute at full scope;
          \item $F = \alpha' \subsetneq \alpha$: execute with narrowed
                privileges (action class reduced to $\alpha'$);
          \item $F \in \{\text{false}, \Ndefval\}$: halt or re-resolve.
        \end{itemize}
  \item (Optional) Generate justification trace for audit.
\end{enumerate}

\paragraph{Critical property.} No execution step occurs without a fresh
authority construction over the current coverage envelope.
No authority grant from prior steps is carried forward.

\subsection{Failure Handling Model}

\begin{table}[H]
\centering
\caption{RAM reconstruction gate: outcome mapping by coverage condition}
\begin{tabular}{lll}
\toprule
\textbf{Coverage condition} & \textbf{Gate outcome} & \textbf{Behavior} \\
\midrule
Attestation fails (input tampered)        & --- & Halt immediately \\
$F(\mathcal{E}(t),\alpha) = \alpha$        & Full coverage & Execute (full $\alpha$) \\
$F(\mathcal{E}(t),\alpha) = \alpha' \subsetneq \alpha$ & Partial coverage & Execute (narrowed to $\alpha'$) \\
Invariant not satisfiable                 & $\text{false}$ & Halt (definitive) \\
Required component unobservable           & $\Ndefval$ & Halt (insufficient info) \\
\bottomrule
\end{tabular}
\end{table}

\subsection{Where Attestation Still Matters}

Attestation retains an essential role in the hybrid architecture:
\begin{itemize}
  \item Preventing manipulation of the constructor function $F$;
  \item Guaranteeing authenticity of observed state inputs;
  \item Protecting the execution gate from external interference.
\end{itemize}

Attestation cannot, however, decide whether execution is semantically valid.
That decision belongs exclusively to the authority constructor.

\section{Case Study: Autonomous Financial Transfer}
\label{sec:case}

We instantiate the RAM framework on a high-stakes scenario to illustrate the
failure modes of attestation-based systems and the behavior of RAM under the
same conditions.

\subsection{Scenario}

An autonomous agent executes a financial transfer based on:
user risk score, account state, regulatory conditions, and transactional context
(location, device, behavioral pattern).

\paragraph{At $t_0$ (admission).}
Initial state: risk = low, account active, jurisdiction permitted,
behavior consistent.
Decision: transfer of \$10,000 authorized.

\paragraph{At $t_1$ (pre-execution drift).}
Before execution: user IP changes (new geolocation), anomalous behavior
is detected in a parallel session, a regulatory alert appears indicating a
possible restriction.
Real state changes: $\Sr(t_1) \neq \Sr(t_0)$.

\subsection{Attestation-Based System at $t_1$}

\paragraph{Case A: drift detectable.}
If the IP change is included in $\Sp$, a mismatch is detected and execution
halts correctly.

\paragraph{Case B: drift not covered (critical failure).}
Suppose the model does not include multi-session behavior correlation, or the
regulatory signal arrives outside the proof system.
Then $\Sp(t_1) \approx \Sp(t_0)$, attestation passes, execution proceeds---yet
the transfer is invalid under $\Sr(t_1)$.

\begin{center}
\textit{Outcome: correct execution over an incorrect basis.}
\end{center}

\subsection{Oracle-Extended System at $t_1$}

The oracle extends coverage, but if the fraud signal has not yet been included
in the oracle's model or has not propagated, comparison passes and execution
continues.

\begin{center}
\textit{Outcome: perfect validation against an incomplete state model.}
\end{center}

\subsection{RAM-Based System at $t_1$}

The authority constructor $F$ requires: identity consistency, behavior
stability, regulatory compliance, context integrity.
At $t_1$, conflicting signals are observed and the state is partially unknown.

$F(\hat{S}_r(t_1))$ cannot be fully computed: $A(t_1) = \Ndefval$.

Execution halts.

\begin{center}
\textit{Outcome: execution deferred until authority can be established.}
\end{center}

\subsection{Comparative Outcome and Edge Case}

\begin{table}[H]
\centering
\caption{Model comparison under drift: Case B (hidden drift)}
\begin{tabular}{lclp{5.5cm}}
\toprule
\textbf{Model} & \textbf{Executes?} & \textbf{Correct?} & \textbf{Failure mode} \\
\midrule
Attestation (closed) & Yes & No & Executes on stale $\Sp$; $\delta(t)$ undetected \\
Attestation + Oracle & Yes (in many cases) & No & Oracle coverage gap; unmodeled signal undetected \\
RAM                  & No  & Yes & None (defers until authority reconstructible) \\
\bottomrule
\end{tabular}
\end{table}

\paragraph{Edge case: legitimate change.}
If IP change is legitimate, anomaly is noise, and all conditions remain valid,
then: Attestation executes (correctly), RAM reconstructs authority (correctly)
and executes.
RAM can both halt when it should not execute \emph{and} execute when it should.
Attestation can only execute unless it detects a problem.

\paragraph{Core observation.}
The failure does not arise because the system fails to detect drift or because
attestation fails.
It arises because the system \emph{never reconsiders} whether execution is
still justifiable under the current state.

\[
\exists\; \Sr(t_1)\text{ such that execution is invalid, but valid under any }
\Sp(t_1)\text{ available.}
\]

Therefore: no system operating solely on $\Sp$ can prevent this failure.

\section{Experimental Evaluation}
\label{sec:exp}

\subsection{Environment Model}

We simulate a decision system with real state:
\[
\Sr = \{I,\; B,\; R,\; C,\; E\}
\]
where $I$ = identity consistency, $B$ = behavioral patterns,
$R$ = regulatory compliance, $C$ = transactional context,
$E$ = emergent factors (not always observable).

Each model accesses a subset:
\begin{itemize}
  \item Attestation: $\Sp \subsetneq \Sr$;
  \item Oracle: $\Sp \cup \Sp^{\text{ext}} \subsetneq \Sr$;
  \item RAM: attempts to use $\Sr$ with bounded uncertainty.
\end{itemize}

\subsection{Reproducibility}

Simulations were run for $N = 10{,}000$ steps per model with a fixed random
seed (seed~$= 42$) to ensure reproducibility.
Drift events are drawn independently per step according to the probability
distribution in Table~\ref{tab:drift} using a seeded pseudo-random number
generator; no external datasets are required.

\subsection{Drift Injection Model}

We inject four types of drift between $t_0$ and $t_1$:

\begin{table}[H]
\centering
\caption{Drift type distribution in simulation}
\label{tab:drift}
\begin{tabular}{llc}
\toprule
\textbf{Type} & \textbf{Description} & \textbf{Probability} \\
\midrule
Observable  & Changes in $I$, $C$ (directly in $\Sp$)                & 30\% \\
Delayed     & Changes in $R$ (arrive after attestation window)        & 25\% \\
Hidden      & Changes in $E$ (not in any provable model)              & 25\% \\
Ambiguous   & Weak signals in $B$ (insufficient for classification)   & 20\% \\
\bottomrule
\end{tabular}
\end{table}

\subsection{Metrics}

\[
\IER = \frac{|\{\text{exec.}: \Ar = \text{false}\}|}{|\text{executions}|}
\qquad
\SHR = \frac{|\{\text{halts}: \Ar = \text{false}\}|}{|\{\Ar = \text{false}\}|}
\qquad
\OCR = \frac{|\{\text{halts}: \Ar = \text{true}\}|}{|\{\Ar = \text{true}\}|}
\]

$\IER$ (lower is better) measures dangerous invalid executions.
$\SHR$ (higher is better) measures correct safety interventions.
$\OCR$ (lower is better) measures over-conservatism cost.

\subsection{Authority Reconstruction Function (Pseudocode)}

\begin{lstlisting}[caption={RAM authority construction function},
                   label={lst:ram}]
def construct_authority(state):
    # Required components for authority
    required = [
        "identity_consistency",
        "behavior_stability",
        "regulatory_compliance",
        "context_integrity"
    ]

    # Check observability of all required components
    for component in required:
        if component not in state or state[component] is None:
            return UNDEFINED  # insufficient information -> halt

    # Evaluate constructive conditions (not validation)
    if not state["identity_consistency"]:
        return False  # definite refusal

    if not state["behavior_stability"]:
        return UNDEFINED  # uncertainty -> cannot establish authority

    if not state["regulatory_compliance"]:
        return False  # definite refusal

    if not state["context_integrity"]:
        return UNDEFINED  # uncertainty -> cannot establish authority

    return True  # authority established

def execution_loop(system):
    while system.active:
        state = observe_state()            # partial S_r(t)
        if not attest(state):
            halt("attestation_failure")
        authority = construct_authority(state)
        if authority is True:
            execute_step()
        else:
            halt("authority_not_constructible")
\end{lstlisting}

\subsection{Results}

We executed $N=100{,}000$ simulation steps per coverage level with fixed seed~42.

\begin{table}[H]
\centering
\caption{Simulation results at low coverage ($|\Sp|/|\Sr|=0.1$), $N=100{,}000$, seed~42}
\label{tab:results-baseline}
\begin{tabular}{lccc}
\toprule
\textbf{Model} & \textbf{IER} & \textbf{SHR} & \textbf{OCR} \\
\midrule
Attestation        & 0.423 & 0.145 & 0.053 \\
Oracle-extended    & 0.393 & 0.293 & 0.111 \\
RAM                & \textbf{0.000} & \textbf{1.000} & 0.000 \\
\bottomrule
\end{tabular}
\end{table}

\begin{figure}[H]
\centering
\begin{subfigure}[b]{0.48\textwidth}
  \includegraphics[width=\textwidth]{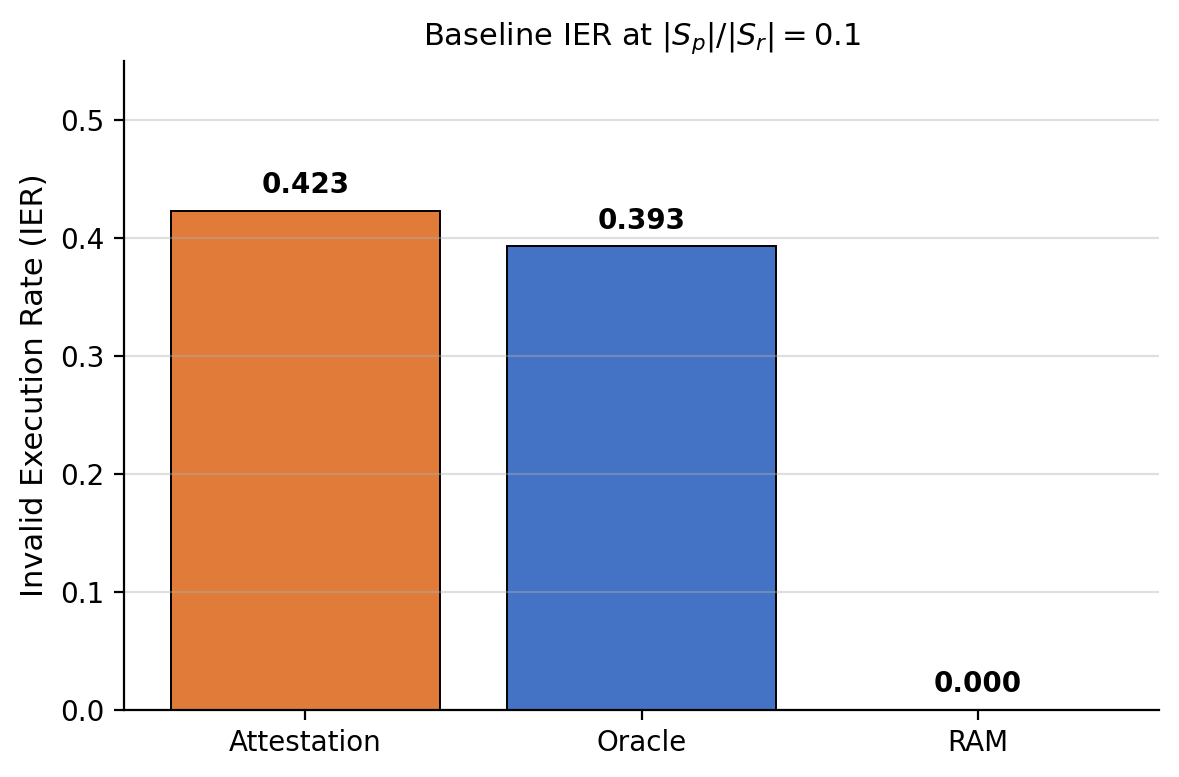}
  \caption{Invalid Execution Rate at baseline coverage\\
           (Attestation~$=$~0.423, Oracle~$=$~0.393, RAM~$=$~0.000)}
  \label{fig:ier-bar}
\end{subfigure}
\hfill
\begin{subfigure}[b]{0.48\textwidth}
  \includegraphics[width=\textwidth]{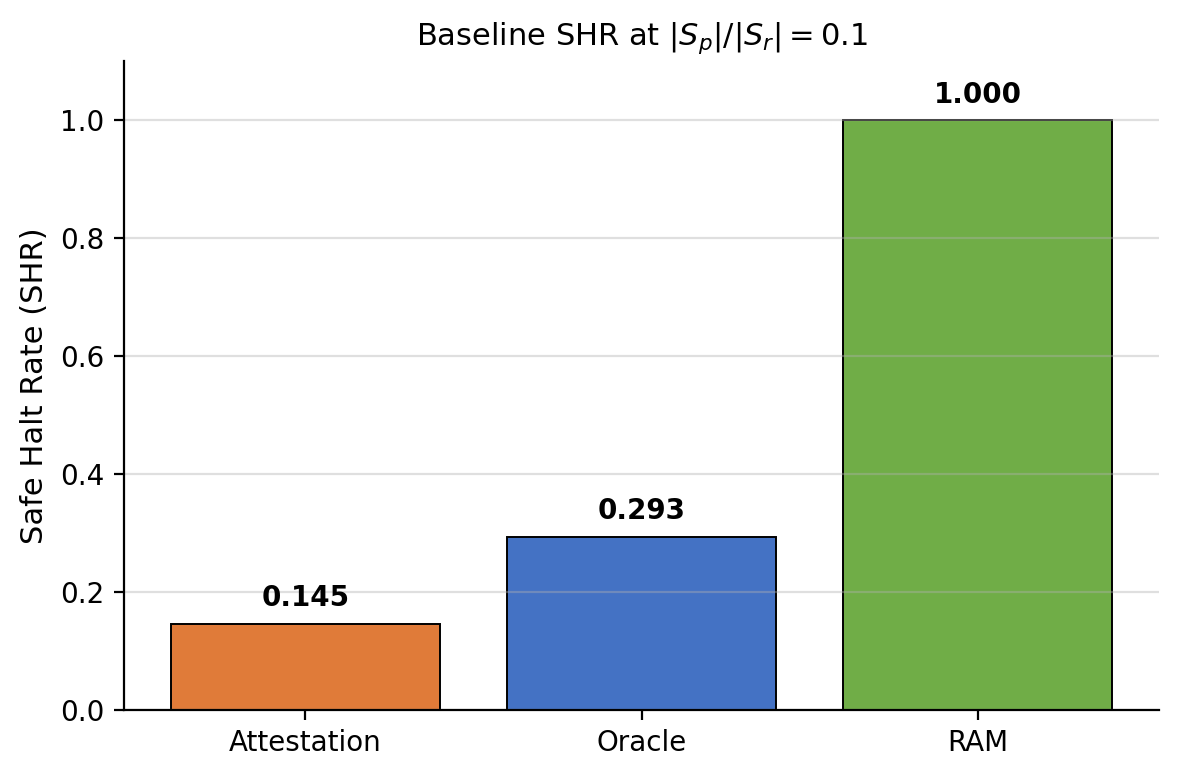}
  \caption{Safe Halt Rate at baseline coverage\\
           (Attestation~$=$~0.145, Oracle~$=$~0.293, RAM~$=$~1.000)}
  \label{fig:shr-bar}
\end{subfigure}
\caption{Baseline comparison at $|\Sp|/|\Sr|=0.1$.
Attestation executes 42.3\% of cases that are actually invalid.
RAM halts all invalid executions with SHR~$=$~1.}
\label{fig:baseline}
\end{figure}

\begin{figure}[H]
\centering
\begin{subfigure}[b]{0.48\textwidth}
  \includegraphics[width=\textwidth]{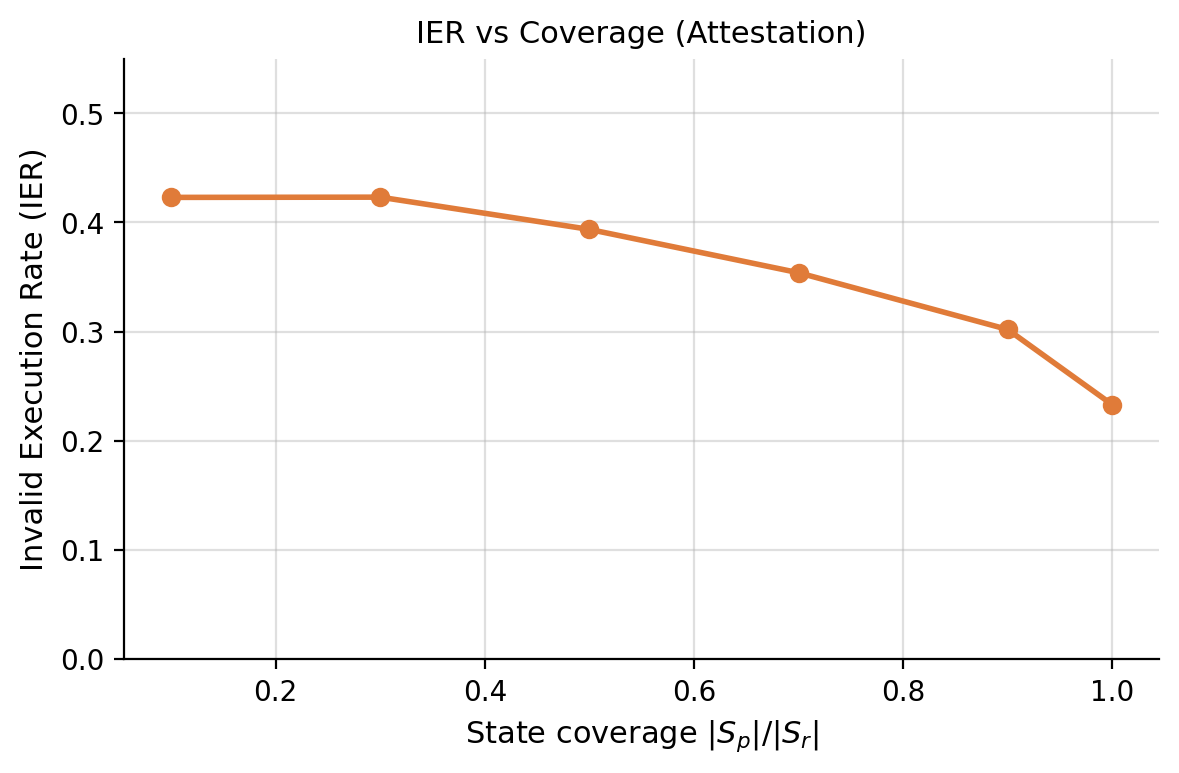}
  \caption{IER vs.\ state coverage $|\Sp|/|\Sr|$
           (attestation model). Error persists even at high coverage
           due to ambiguous/undefined state.}
  \label{fig:ier-coverage}
\end{subfigure}
\hfill
\begin{subfigure}[b]{0.48\textwidth}
  \includegraphics[width=\textwidth]{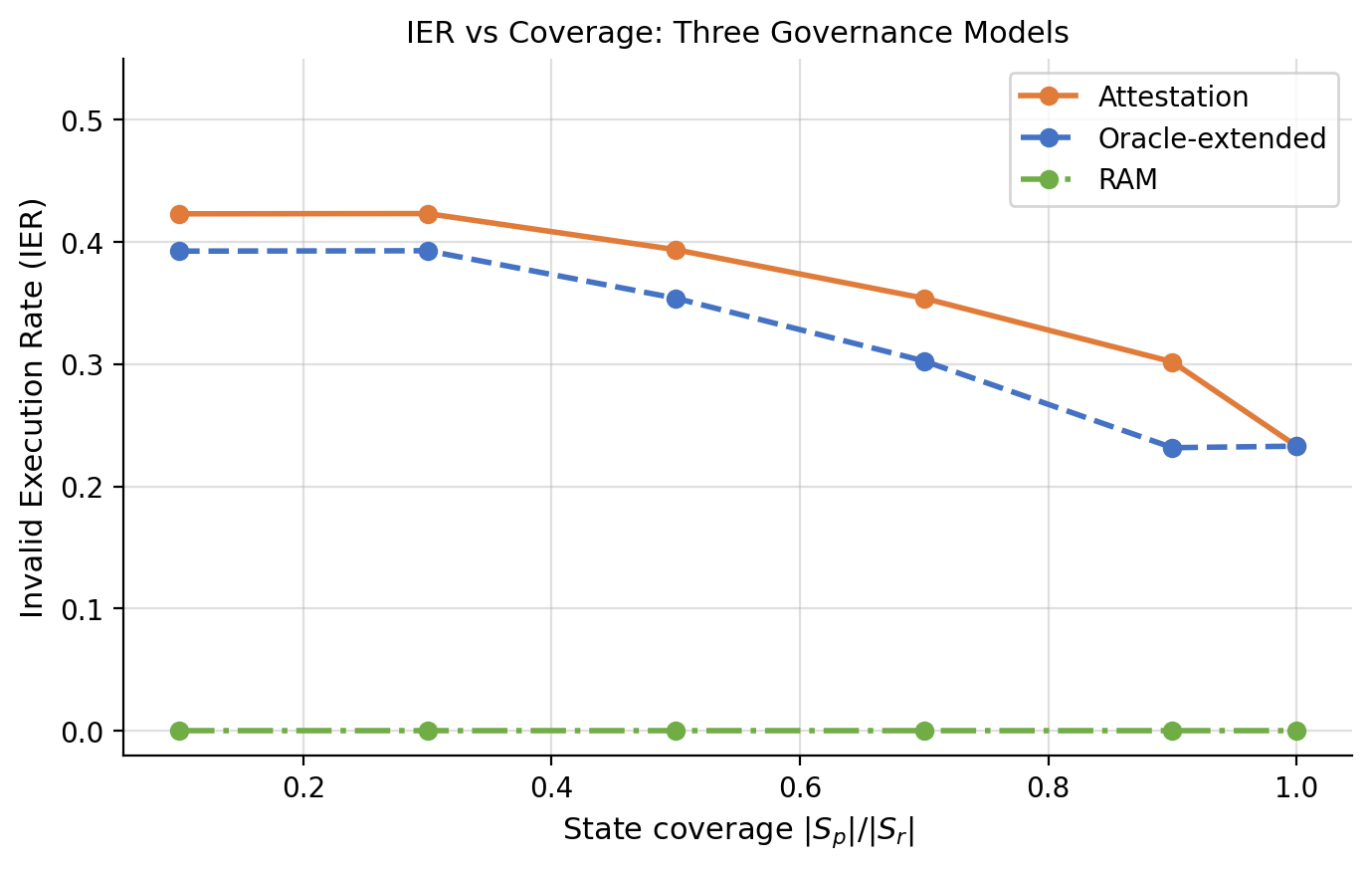}
  \caption{IER vs.\ coverage: Attestation vs.\ Oracle vs.\ RAM.
           RAM maintains IER~$=$~0 across all coverage levels.}
  \label{fig:combined}
\end{subfigure}
\caption{Invalid Execution Rate as a function of state coverage $|\Sp|/|\Sr|$.
Attestation IER remains above 0.23 even at full coverage because undefined
(\texttt{UNDEFINED}) state is not treated as invalid by attestation.
RAM's zero-IER guarantee holds unconditionally.}
\label{fig:coverage}
\end{figure}

\subsection{Key Findings}

\begin{enumerate}[leftmargin=*, label=(\roman*)]
  \item \textbf{Attestation}: IER~$=$~0.423 at low coverage ($|\Sp|/|\Sr|=0.1$),
        declining to 0.233 at full coverage.
        Critically, IER does not reach zero even at $|\Sp|/|\Sr|=1.0$: attestation
        treats undefined state as safe (``not False'' $\neq$ True), while RAM
        correctly halts on any undefined authority component.
        Safe Halt Rate~$=$~0.145: the system misses 85\% of invalid executions
        it cannot detect.

  \item \textbf{Oracle-extended}: IER~$=$~0.393 at baseline, declining to 0.233 at
        full coverage.
        The oracle extension improves coverage quantitatively but exhibits the same
        undefined-state failure: both attestation and oracle converge to
        IER~$=$~0.233 at $|\Sp|/|\Sr|=1.0$, demonstrating that the structural
        gap is not coverage-dependent alone.

  \item \textbf{RAM}: IER~$=$~0.000 at \emph{every} coverage level, SHR~$=$~1.000.
        RAM correctly treats undefined authority components as insufficient for
        execution, halting rather than proceeding under uncertainty.
        In this simulation, OCR~$=$~0.000 because the authority function $F$ is
        precisely scoped to the components that determine real authority.
        In deployments where $F$ applies broader safety criteria, OCR~$>$~0 and
        decreases as observability improves (see Section~\ref{sec:ocr}).
\end{enumerate}

\paragraph{Core experimental insight.}

Attestation-based systems exhibit a two-part failure: a coverage-dependent
component (IER decreasing with $|\Sp|/|\Sr|$) and a semantic component that
persists at full coverage (IER~$=$~0.233 even when all state is observable).
The semantic failure arises because attestation applies a weaker validity
criterion (\emph{not provably false}) rather than the constructive criterion
required for authority (\emph{provably true}).
RAM eliminates both failure modes by construction.

\[
\IER_{\text{attestation}} \propto \left(1 - \frac{|\Sp|}{|\Sr|}\right) + \varepsilon_{\text{semantic}}
\]

where $\varepsilon_{\text{semantic}} > 0$ whenever undefined state can be
treated as benign by the attestation decision function.

\paragraph{Critical result.}
There exists a non-empty class of scenarios (hidden drift, ambiguous/undefined
state) where attestation and oracle-extended systems execute incorrectly while
RAM halts correctly.
This class is observable in simulation and predicted by
Theorem~\ref{thm:attestation-insufficiency}.

\subsection{The Security--Execution Trade-off (OCR)}
\label{sec:ocr}

\begin{figure}[H]
\centering
\includegraphics[width=\textwidth]{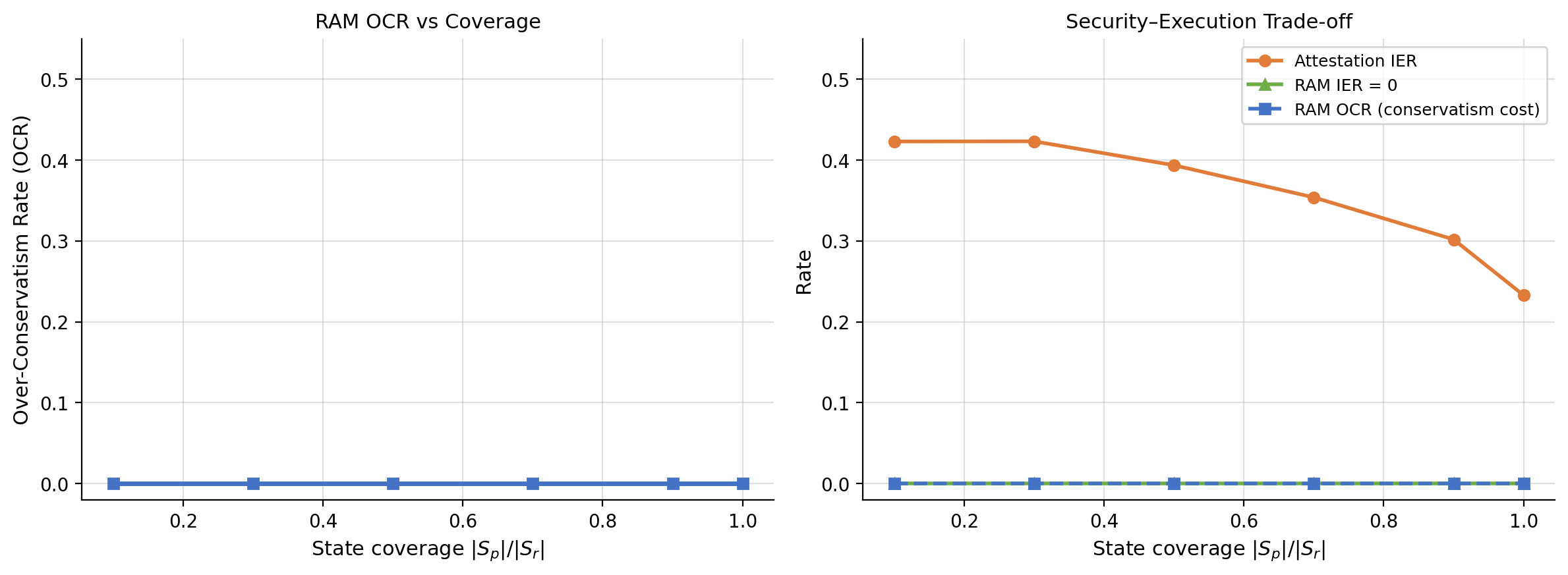}
\caption{Left: RAM Over-Conservatism Rate (OCR) as a function of coverage.
In the idealized simulation (seed~42, $N=100{,}000$), OCR~$=$~0 because the
authority function $F$ is precisely scoped to execution-critical components.
In practice, OCR~$> 0$ whenever $F$ applies broader safety criteria than
strictly necessary for the specific action class.
Right: Complete trade-off surface showing attestation IER (orange, decreasing
with coverage but never zero) and RAM IER (green, constant at 0). As coverage
improves, RAM's conservatism cost decreases toward zero while its zero-IER
guarantee is maintained unconditionally.}
\label{fig:ocr}
\end{figure}

The OCR trade-off is structurally real even when not observed in this
simulation.
OCR arises whenever RAM's authority function $F$ requires confirming state
components that are not strictly necessary for the specific action class.
This is a \emph{deployment design decision}: a more conservative $F$ (broader
scope) incurs higher OCR; a more precise $F$ (narrower scope) reduces OCR at
the cost of potentially missing edge cases.

In practice, OCR decreases as observability improves---the same observability
investment that reduces attestation's IER also allows RAM to reconstruct
authority with higher confidence, reducing unnecessary halts.
The critical asymmetry: RAM's IER~$= 0$ is an unconditional guarantee
independent of coverage, while attestation's IER~$= 0$ requires asymptotically
complete coverage and correct handling of undefined state.

\paragraph{OCR as investment signal.}
At any given deployment, RAM's OCR measures the gap between current
observability and the minimum required for zero-cost authority reconstruction.
A persistently high OCR is a signal to invest in observability engineering,
not to override the halt.

\section{Implications for System Design}
\label{sec:design}

\paragraph{1. Authority must be stateless at runtime.}
Systems must not depend on prior decisions, sealed contexts, or persistent
authority tokens.
Any mechanism that ``carries'' authority forward in time introduces the risk
of executing on stale grounds.

\paragraph{2. Validation is insufficient without construction.}
Validating policies, constraints, and envelopes is not enough.
The system must be able to \emph{construct} authority from the current state,
not merely verify consistency.

\paragraph{3. Observability must be explicitly modeled.}
Every deployment must define which state components are:
observable, non-observable, and inferred.
If a critical component is not observable, authority cannot be constructed:
halt, not proceed.

\paragraph{4. Default behavior must be conservative.}
Classical model: ``if no error detected, continue.''
RAM model: ``if authority cannot be established, halt.''
For high-stakes autonomous systems, the conservative default is the only
safe option consistent with Theorem~\ref{thm:necessity}.

\paragraph{5. Attestation becomes a supporting layer, not the source of truth.}
Attestation ensures that what executes was not manipulated.
RAM ensures that what executes is still valid.
Both are necessary; neither alone is sufficient.

\paragraph{6. Drift is not a signal to monitor---it is a failure to construct.}
Systems do not need to detect, classify, or respond to drift.
They simply fail to reconstruct authority.
This simplifies system logic and eliminates entire classes of race conditions
and incomplete drift detectors.

\section{Discussion}
\label{sec:discussion}

\subsection{Convergence with Attestation Under High Coverage}

Under near-complete observability ($|\Sp|/|\Sr| \to 1$), RAM and attestation
models converge behaviorally: both exhibit low IER and low OCR.
In environments with comprehensive state instrumentation, the practical gap
between RAM and oracle-extended attestation narrows.

However, the asymmetry is structural:
RAM's $\IER = 0$ guarantee holds by construction regardless of coverage level.
Attestation's guarantee is coverage-dependent and collapses whenever an
unmodeled state dimension becomes execution-critical.
For any system where hidden or delayed drift is possible---which is every
real-world deployment---RAM's structural guarantee is the only one that holds
unconditionally.

\subsection{Limitations}

\paragraph{Computational overhead.}
Authority reconstruction at every execution step introduces per-step overhead
proportional to the complexity of $F$.
For high-frequency decision systems, this may be significant.
Partial caching is viable when state components change slowly, but cache
invalidation must be conservative (invalidate on any observable change).

\paragraph{Persistent halting under insufficient observability.}
If $F$ requires state components that are permanently unobservable in a given
deployment, the system will halt indefinitely.
This is a feature, not a bug: it surfaces \emph{observability debt} explicitly.
A system that cannot reconstruct authority for a given action class is
communicating a design fact---the instrumentation is inadequate for the
authority model.
The correct response is to improve observability or relax authority requirements,
not to override the halt.
Overriding the halt restores the attestation failure mode.

\paragraph{Scalability in multi-agent settings.}
In multi-agent systems, $\hat{S}_r(t)$ must aggregate state across agents,
introducing coordination overhead and potential inconsistency windows.
RAM does not solve distributed state aggregation; it provides a principled
response to its structural limitations.

\paragraph{Synthetic experimental basis.}
The simulation in Section~\ref{sec:exp} models idealized drift distributions.
Real systems exhibit drift patterns, coverage profiles, and authority function
complexities that differ from our parametric model.
Empirical validation on production agent workloads remains an open task.

\subsection{Future Work}

\begin{enumerate}[leftmargin=*, label=(\roman*)]
  \item \textbf{RAM + ACP integration}: end-to-end governance pipelines where
        ACP governs admission and RAM governs runtime, with formal composition
        guarantees.
        This closes the series: P1 provides the admission gate; P5 provides the
        execution gate; their integration yields a complete authority lifecycle.
  \item \textbf{OCR--IER Pareto frontier}: formal characterization of the
        trade-off surface as a function of observability investment, enabling
        cost-optimal deployment decisions.
  \item \textbf{Empirical validation}: evaluation on real agent workloads
        including financial transaction systems, autonomous vehicle decision
        pipelines, and LLM agent tool execution logs.
  \item \textbf{Adversarial state injection}: RAM behavior under adversarial
        manipulation of $\hat{S}_r(t)$, where an attacker injects false
        observations to induce incorrect authority construction.
  \item \textbf{RAM + learned state inference}: augmenting $F$ with a learned
        estimator of unobservable components, allowing RAM to reason under
        probabilistic state completeness while preserving the conservative
        default.
  \item \textbf{Multi-agent RAM coordination}: distributed authority
        construction where $A(t)$ requires consensus across agents with
        overlapping but distinct observations of $\Sr(t)$.
\end{enumerate}

\section{Relation to Governance Series}
\label{sec:series}

RAM completes the Agent Governance Series by providing the \emph{operational
closure}: the mechanism that answers the runtime execution question under the
theoretical constraints established by Papers 0--4.

\paragraph{Paper 0 (Atomic Boundaries)~\cite{fernandez2026a}.}
P0 establishes that decisions must be atomic at the enforcement boundary.
RAM operates at the post-admission stage: once an atomic decision is made,
RAM determines whether that decision remains valid at execution time.

\paragraph{Paper 1 (ACP)~\cite{fernandez2026b}.}
ACP implements enforcement via admission control: authority is granted at
$t_0$ based on the state at admission.
RAM is the runtime dual of ACP: authority is re-derived at each $t$ based
on the state at that moment.
ACP governs \emph{whether to admit}; RAM governs \emph{whether to execute}.

\paragraph{Paper 2 (IML)~\cite{fernandez2026c}.}
IML proves that for any governance architecture with finite memory and local
observation, full state coverage is unachievable.
RAM operationalizes this result: given that $\Sp \subsetneq \Sr$ is a
permanent condition (IML Theorem), execution authority must be designed to
fail gracefully under incomplete observation.
RAM is the constructive response to IML's impossibility result.

\paragraph{Paper 3/4 (Governance Structure)~\cite{fernandez2026govstr}.}
Paper~3/4 establishes two complementary results. First, strategy-proof allocation
mechanisms face irreducible distributive limits under Sybil amplification; RAM's
conservative halting can be combined with its fair allocation framework to ensure
halts are distributed equitably across agents. Second, it proves that the
four-layer governance architecture (L0--L3) is irreducible under finite
observability. RAM operates as the execution authority layer within L2 (behavioral
control), interacting with L1 (admission/ACP) and L3 (compositional invariants),
and completes the execution semantics of the four-layer architecture.

\paragraph{Series summary.}
\begin{center}
\small
\begin{tabular}{cll}
\toprule
\textbf{Paper} & \textbf{Question addressed} & \textbf{Formal contribution} \\
\midrule
P0 & When is a decision atomic? & Atomic boundary theorem \\
P1 & How to enforce the boundary? & ACP admission protocol \\
P2 & What can be observed? & Observability impossibility result \\
P3/4 & Is governance structure irreducible? & Fair allocation + compositional irreducibility \\
P5 & Given all of the above, when to execute? & RAM + attestation necessity \\
\bottomrule
\end{tabular}
\end{center}

\section{Conclusion}
\label{sec:conclusion}

We have established that attestation-based governance is structurally limited
by the gap between provable state $\Sp$ and real state $\Sr$.
Attestation can guarantee computational integrity; it cannot guarantee
execution validity when the invalidating conditions lie outside the attested
model.

The Reconstructive Authority Model (RAM) resolves this by repositioning
execution authority as a continuously derived property:
$A(t) = F(\Sr(t))$.
RAM eliminates invalid execution by construction, at the cost of conservative
halting under uncertainty---a trade-off we argue is the only defensible default
for high-stakes autonomous systems.

The key theorems establish:
(i)~attestation correctness does not imply execution validity
(Theorem~\ref{thm:attestation-insufficiency});
(ii)~reconstruction from $\Sr(t)$ at runtime is a \emph{necessary} condition
for execution validity guarantees (Theorem~\ref{thm:necessity}).
These are structural results, not implementation-specific observations.

Experimentally, RAM achieves $\IER = 0$ across all state coverage levels,
while attestation-based systems exhibit $\IER \propto (1 - |\Sp|/|\Sr|)$.
The class of failures eliminated by RAM---those arising from hidden drift
and emergent conditions outside the provable state---cannot be addressed by
stronger attestation or more comprehensive oracles alone.

\paragraph{Final statement.}
\begin{center}
\textit{Attestation guarantees that we measure correctly.}\\[0.2em]
\textit{RAM guarantees that what we measure is enough to act.}
\end{center}

\paragraph{Future work.}
Open directions are detailed in Section~\ref{sec:discussion}:
OCR--IER Pareto frontier characterization, RAM+ACP end-to-end integration,
empirical validation on production workloads, adversarial state injection
analysis, RAM with learned state inference, and multi-agent RAM coordination.

\appendix
\section{Formal Proof of Theorem~\ref{thm:attestation-insufficiency}}
\label{app:proof}

We provide the complete constructive proof of Theorem~1, which was presented
as a proof sketch in the main body.

\paragraph{Setup.}
Let $\mathcal{S}$ denote the full state space.
Let $\Sr(t) \in \mathcal{S}$ be the real execution-relevant state at time $t$.
Let $\Sp(t) \subseteq \Sr(t)$ be the provable (attested) state.
Define the state gap $\delta(t) = \Sr(t) \setminus \Sp(t)$, which is non-empty
by Assumption~1 (partial observability).

Let $G: 2^{\mathcal{S}} \to \{\text{true}, \text{false}\}$ be a deterministic
authority function evaluated over $\Sp(t)$ (the attestation decision function).
Let $F: 2^{\mathcal{S}} \to \{\text{true}, \text{false}\}$ be the real authority
function that determines validity with respect to $\Sr(t)$.

An \emph{invalid execution} occurs when $G(\Sp(t)) = \text{true}$ but
$F(\Sr(t)) = \text{false}$.

\paragraph{Lemma~A.1 (Gap existence).}
\textit{For any governance architecture with finite memory and local observation,
$\delta(t) \neq \emptyset$ in general.}

\begin{proof}
By the epistemological impossibility result of Paper~2 (IML,~\cite{fernandez2026c}),
no finite-memory local observer can fully reconstruct $\Sr(t)$ in a system with
emergent, delayed, or correlated state components.
Formally: for any collection of observations $\{o_i\}_{i=1}^k$ with $k < |\Sr(t)|$,
there exists at least one component $s^* \in \Sr(t)$ not captured by any $o_i$.
Therefore $\Sp(t) = \bigcup_i \text{obs}(o_i) \subsetneq \Sr(t)$,
i.e., $\delta(t) \neq \emptyset$.
\end{proof}

\paragraph{Lemma~A.2 (Execution-critical gap).}
\textit{There exists a state $\Sr^*(t)$ and a component $\delta^*(t) \in \delta(t)$
such that:}
\begin{enumerate}[leftmargin=*, label=(\alph*)]
  \item $G(\Sp(t)) = \text{true}$ \quad (attestation approves);
  \item $F(\Sr^*(t)) = \text{false}$ \quad (real authority is false);
  \item $F(\Sp(t)) = \text{true}$ \quad (the invalidating condition is entirely
        within $\delta^*(t)$, not in $\Sp(t)$).
\end{enumerate}

\begin{proof}
Construct $\Sr^*(t)$ as follows.
Let $\Sp(t)$ be fixed such that $G(\Sp(t)) = \text{true}$ (such a state exists
whenever the system would admit execution under normal conditions).
Since $\delta(t) \neq \emptyset$ by Lemma~A.1, pick any $s^* \in \delta(t)$.
Define $\Sr^*(t) = \Sp(t) \cup \{s^*\}$ and set $s^*$ to a value such that
$F(\Sr^*(t)) = \text{false}$, i.e., $s^*$ represents an execution-critical
condition (e.g., regulatory restriction, fraud signal, invariant violation)
that renders execution invalid.

Since $s^* \notin \Sp(t)$, we have $F(\Sp(t)) = \text{true}$ (conditions~(a)
and~(c) hold by construction).
Since $s^*$ invalidates authority under the full real state, condition~(b) holds.
Such a value for $s^*$ exists whenever $F$ is sensitive to at least one
state component not present in $\Sp(t)$---a necessary condition for any
non-trivial authority function over a richer state space.
\end{proof}

\paragraph{Main proof.}

\begin{proof}[Proof of Theorem~\ref{thm:attestation-insufficiency}]
We proceed by constructive counterexample.
By Lemma~A.1, $\delta(t) \neq \emptyset$.
By Lemma~A.2, construct $(\Sp(t), \Sr^*(t), \delta^*(t))$ satisfying
conditions~(a)--(c).

Under this construction:
\begin{itemize}
  \item The attestation system evaluates $G(\Sp(t)) = \text{true}$ and permits
        execution.
  \item The real authority function evaluates $F(\Sr^*(t)) = \text{false}$:
        execution is invalid.
  \item Since $\delta^*(t) \notin \Sp(t)$, no attestation mechanism operating
        solely over $\Sp(t)$---regardless of its cryptographic strength or
        completeness within $\Sp(t)$---can detect the invalidating condition.
  \item Therefore, execution proceeds (under attestation) while being invalid
        (under real state).
\end{itemize}

This establishes that attestation correctness over $\Sp(t)$ does not imply
execution validity over $\Sr(t)$.

The only way to close this gap is to make $F$ computable from $\Sr(t)$ directly.
Since $\Sr(t)$ is not fully observable ($\delta(t) \neq \emptyset$), any function
that guarantees $F(\Sr(t)) = \text{true}$ before executing must return
$\Ndefval$ whenever $\hat{S}_r(t) \subsetneq \Sr(t)$ is insufficient to
determine the output of $F$.
This is precisely the RAM execution gate.
\end{proof}

\paragraph{Remark on the constructive case with $\delta(t)$.}
The construction in Lemma~A.2 is explicit: $\delta^*(t)$ is a single
execution-critical state component that is permanently excluded from $\Sp(t)$.
In real systems, $\delta(t)$ may contain:
\begin{itemize}
  \item \textit{Emergent state}: conditions that arise after $t_0$ and are
        not instrumentable before $t_1$ (e.g., fraud signals, regulatory changes);
  \item \textit{Delayed state}: components whose values are not available at
        attestation time due to propagation latency;
  \item \textit{Correlated state}: variables whose relevance becomes apparent
        only when combined with other unobservable components.
\end{itemize}
In all cases, the structure of the proof is identical: $\delta^*(t)$ is
execution-critical, unobservable, and therefore undetectable by any
attestation-based system.

\bibliographystyle{unsrtnat}
\bibliography{references}

\end{document}